\providecommand{\tabularnewline}{\\}
\theoremstyle{plain}
\newtheorem{thm}{\protect\theoremname}
\providecommand{\theoremname}{Theorem}
\begin{document}
\title{Signaling Games with Costly Monitoring\thanks{I would like to thank Tommaso Denti for his comments. I am most grateful
to Klaus Ritzberger for his supervision while writing this paper.}}
\author{Reuben Bearman\thanks{Department of Economics, Royal Holloway, University of London, Egham,
Surrey, TW20 0EX, United Kingdom. Email.\ Zete031@live.rhul.ac.uk }}
\date{January 2023}
\maketitle
\begin{abstract}
\textcolor{black}{If in a signaling game the receiver expects to gain
no information by monitoring the signal of the sender, then when a
cost to monitor is implemented he will never pay that cost regardless
of his off-path beliefs. This is the argument of a recent paper by
T. Denti (2021). However, which pooling equilibrium does a receiver
anticipate to gain no information through monitoring? This paper seeks
to prove that given a sufficiently small cost to monitor any pooling
equilibrium with a non-zero index will survive close to the original
equilibrium.} \newline\textbf{JEL Classification}: C72\newline \textbf{Keywords:
}Signaling games, monitoring cost, essential equilibrium.
\end{abstract}

\section{Introduction }

\textcolor{black}{In economic research wealth is an important variable.
However, this is often not directly observable or measurable. In most
sciences, these variables are replaced with a proxy, which is a variable
that offers insight into this omitted variable. However, people have
an incentive to misrepresent their wealth, for example, to reduce
their tax burden or for prestige. How can you be sure that this signal
is representative of the truth? }

\textcolor{black}{Signalling games are predominantly focused on how
truth is preserved through a signal from an informed player to an
uninformed one. Perhaps one of the most important questions derived
from this is how a modification to a base signalling game changes
the level of information passed on through a signal. The quantity
of information that is maintained can be seen through the type of
equilibrium reached.}

Two types of equilibrium can occur in signaling games, pooling and
separating, distinguished by the information that can be inferred
from the signal. In a separating equilibrium, each of the sender's
types will choose a different message. This allows the receiver to
infer the type based on the signal received and choose a best reply
with complete certainty as to which type he is facing. Pooling equilibria
have each type of the sender choose the same signal, making it impossible
for the receiver to differentiate between the different types of the
sender. Instead, the best response of the receiver will entirely rely
on his belief about the distribution of the sender's types and result
in inefficient outcomes that under perfect information would never
occur.

To see the importance of signaling games in game theory one has to
look at its history. Non-cooperative game theory relies on the notion
of complete information, that is, that there is common knowledge about
the rules of the game and each player's preferences. This severely
limits the situations that the theory can be applied to and creates
a large number of real-world situations that are left unexplained.
However, Harsanyi (1967) showed that this theory was powerful enough
to overcome these limitations, proposing a new class of games that
model this incomplete information in the form of a complete information
game. The Harsanyi transformation modifies the game so that one or
more players can have multiple types. Each player's type has its own
unique preferences, these preferences, as well as the rules of the
game, remain common knowledge. However, each player's type is known
only by the player, introducing incomplete information about the type.
The simplest form of incomplete information is a signaling game, which
has only one player with multiple types and, therefore, player 1's
choice of action becomes a message to help inform the other player
of what type player 1 could be. 

\textcolor{black}{T. Denti (2021) proposes that a player's off-path
beliefs are not the sole factor in equilibrium selection, instead
he suggests that if the sender anticipates an action to have zero
probability of occurring then the off-path beliefs about that action
can be sidestepped. This is viewed in the context of the modification
of a small cost to monitor. Costly monitoring introduces the option
for the receiver to either pay a price to observe the message sent
by the sender or to not observe the message. This decision happens
simultaneously with the sender's selection of a message. Take a situation
where the receiver expects a pooling equilibrium regardless of his
decision to monitor, that is he expects to gain no information by
observing the sender's signal. Since he expects to gain no information
by observing the senders signal as soon as a cost to monitor, no matter
how small, is implemented the receiver will never choose to monitor
the senders signal. This is not the case for every pooling equilibrium
however, T. Denti (2021) does not provide a general criterion as to
which pooling equilibrium will survive the introduction of monitoring
costs and which will not. This paper seeks to complement T. Denti
(2021) by adding further refinement as to which pooling equilibrium
there will be an equilibrium sufficiently close to the original where
the receiver still chooses to monitor. That is for which pooling equilibrium,
equilibrium analysis would depend on off-path beliefs. This paper
argues that as long as the cost-to-monitor is sufficiently small there
exists a mixed equilibrium close to the original pooling equilibrium
if the pooling equilibrium has a non-zero index. As the sender deviates
to a separating equilibrium this creates an incentive for the receiver
to monitor. This produces a mixed strategy equilibrium where the receiver
monitors a proportion of the time and likewise the sender would also
choose to pool some proportion of the time and separate the rest of
the time. The rate at which the sender separates and the receiver
chooses to not monitor increases with cost up to some threshold cost
where it is no longer optimal for the receiver to monitor thus destroying
the pooling equilibrium.}

\textcolor{black}{The contribution of this paper is to prove that
all equilibria with a non-zero index will survive a sufficiently small
cost to monitor and therefore through the methodology in T. Denti
(2021) be selected. That is not to say that the reverse is true however
as some pooling equilibrium with a zero index also have the property
of being essential and therefore will survive a small cost to monitor.
It does follow from the findings in this paper however that all pooling
equilibria that will not be selected will have an index of zero. }

\subsection{Relation to the literature}

T. Denti (2021) seeks to look at the role of off-path beliefs in equilibrium
analysis by considering signaling games with a sufficiently small
cost to monitor. The findings of this paper are that if the receiver's
off-path belief has zero probability, those beliefs can be sidestepped.
When applied to signaling games, if player 2 expects a pooling equilibrium
(i.e. the same message is selected regardless of player 1's type),
then player 2 will assign a probability of zero to any other message.
When this game is modified by introducing costly monitoring, player
2 faces a decision to pay some cost to monitor player 1's message.
If player 2 expects a pooling equilibrium, no new information would
be gained by monitoring but he would still have to pay the cost. Therefore,
not monitoring will dominate monitoring. However, this is only the
case for pooling equilibrium where the receiver does not believe that
the sender will deviate when not monitored. The contribution of this
paper is to prove that the index of pooling equilibrium is a useful
tool for determining whether they survive the modification of a small
cost to monitor. This paper finds that any pooling equilibrium with
a non-zero index survives a sufficiently sma\textcolor{black}{ll cost
to monitor with an equilibrium close to the original. The inspiration
for this proof comes from an old argument in the literature about
the survival of the first-mover advantage. This is primarily due to
the similarity between Denti's argument and that of Bagwell (1995).}

Bagwell (1995) disputes the survival of the first-mover advantage,
originally laid out by von Stackelberg (1934), when there is an error
in the observation of player 1's action. The first-mover advantage
comes from players 1's ability to commit to a strategy in a perfect
information game. Player 2 will then take this action as given and
pick the best response against it. Since player 2's strategies and
preferences are common knowledge, player 1 will choose to commit to
the strategy where player 2's best response maximizes her outcome,
hence giving player 1 an advantage. This equilibrium will be referred
to as the \textquotedblleft Stackelberg outcome\textquotedblright .
Importantly, this strategy is often different from the equilibrium
outcome when this game is played as a simultaneous-move game. This
equilibrium outcome in the simultaneous-move game is also an equilibrium
of the sequential game, though not a sub-game perfect one. Bagwell
disputes this notion by introducing an error in the observation of
player 1's action, that is, a small percentage of the time a signal
about a different play is sent to player 2. The argument goes that
player 2 will believe that any play that he sees that deviates from
the \textquotedblleft Stackelberg outcome\textquotedblright{} will
be this error message and thus will continue to play his best response
to the latter. This allows player 1 to then deviate to a more profitable
strategy given player 2's belief. The result of this is that the only
equilibrium to survive is the equilibrium outcome of the simultaneous-move
game. 

The similarities between these two papers may not be immediately obvious,
but if one views the Stackelberg game as containing a signal instead
of complete knowledge, it becomes clearer. Both papers suggest that
the response to a lack of trust in a signaling system results in that
system becoming useless. For Bagwell as soon as there is some error
introduced to the signal about player 1's action, player 2 responds
by not trusting any signal he gets from player 1, resulting in the
equilibrium of the game reverting to that of a game without monitoring
or signals. Likewise for Denti, as soon as there is a lack of faith
in the signal caused by an expected pooling equilibrium, the signal
is abandoned reverting to an equilibrium found in the game without
monitoring or signals. 

Since the cores of these papers are similar, the inspiration for this
paper came from the counter arguments to Bagwell (1995). The argument
against Bagwell comes primarily from two papers, van Damme and Hurkens
(1994) and G\"uth, Kirchsteiger, and Ritzberger (1996). Van Damme
and Hurkens (1994) find that Bagwell's conclusions are reached by
an over-reliance on pure strategy equilibrium. They found that instead,
when the error rate is small, there is a mixed strategy equilibrium
close to the original \textquotedblleft Stackelberg outcome\textquotedblright .
More importantly, this mixed strategy equilibrium tends towards the
\textquotedblleft Stackelberg outcome\textquotedblright{} as the error
decreases. G\"uth, Kirchsteiger, and Ritzberger (1996) found that,
although the extensive forms of the Stackelberg-game and its modification
with noisy observations are not comparable, the normal forms are.
The ``Stackelberg outcome'' was then shown to have a non-zero index.
An equilibrium component with a non-zero index was then shown to always
be essential, meaning that when payoffs are perturbed an equilibrium
will exist close by the original component with a non-zero index.
They conclude that the mixed equilibrium described by van Damme and
Hurkens (1994) was instead the result of the essentiality quality
of the ``Stackelberg outcome'' given a sufficiently small perturbation.

This paper argues that the similarity between Bagwell (1995) and T.
Denti (2021) leads to a similarity in solution. By applying index
theory, such as in G\"uth, Kirchsteiger, and Ritzberger (1996), a
mixed equilibrium can be found where the choice to monitor is maintained
as well as a pooling equilibrium, given a sufficiently small cost
to monitor.

This paper as well as those that inspire its proof rely heavily on
index theory, a mathematical topic within the study of topology. The
following is a brief history of the application of index theory within
game theory. For in-depth information on index theory see either McLennan
(2018) or Brown (1993). Index theory was first applied to 2-player
games by Shapley (1974). Later this was then generalized by Ritzberger
(1994) to equilibrium components of arbitrary finite games. Gul, Pearce,
and Stacchetti (1993) studied the number of regular equilibria in
generic games by using index theory. Govindan and Wilson (1997, 2005)
applied index theory to the study of refinement concepts. Demichelis
and Germano (2000) used index theory to study the geometry of the
Nash equilibrium correspondence. Demichelis and Ritzberger (2003)
applied it to evolutionary game theory. Eraslan and McLennan (2013)
applied it to coalitional bargaining.

\medskip{}

This paper is split into four sections, with both sections 2 and 3
being split into two subsections. Section 2 is entitled Preliminaries
and subsection 2.1 introduces all necessary notation to set up generic
signaling games. Section 2.2 illustrates with an example, the beer-quiche
game from Cho and Kreps (1987). Results are found in section 3.1 and
in section 3.2 the methodology of the proof is applied to the beer-quiche
game to once more illustrate the findings of this paper. Section 4
concludes.

\section{Preliminaries}

This section is broken into two parts. The first contains all relevant
definitions and concepts as well as the set up for the model. The
second illustrates this set up with an example. 

\subsection{Definitions}

This paper looks at generic 2-player signaling games. The two players
are referred to as player 1 or the \emph{sender} and player 2 or the
\emph{receiver}, these terms being used interchangeably. For clarity
the sender will be female (she/her) and the receiver male (he/his).
A generic signaling game $G=\left(T,M,A,p,u\right)$ is specified
by five objects: A finite set of types $t\in T$ assigned to player
1 by chance, a finite set of \emph{messages} or signals $m\in M$
sent by player 1 conditional on her type, a finite set of \emph{actions}
$a\in A$ taken by player 2, a \emph{probability distribution} $p:T\rightarrow\mathbb{R}_{+}$
over types (such that $\sum_{t\in T}p\left(t\right)=1$) and a \emph{payoff
function} $u=\left(u_{1},u_{2}\right):T\times M\times A\rightarrow\mathbb{R}^{2}$.
The order of play of this game goes as follows:
\begin{enumerate}
\item Player 1 is assigned a type $t\in T$ according to the probability
vector $p$ by chance.
\item Player 1 then observes her type and sends a message $m\in M$ to player
2 conditional on her type. 
\item Player 2 observes the message, but not the type of player 1, and chooses
an action $a\in A$, which ends the game. 
\end{enumerate}
Pure strategies for the signaling game $G$ for player 1 are $s_{1}\in S_{1}\equiv M^{T}=\left\{ s:T\rightarrow M\right\} $
and for player 2 the functions $s_{2}\in S_{2}\equiv A^{M}=\left\{ s:M\rightarrow A\right\} $.
Mixed strategies for the signaling game $G$ are probability distributions
over pure strategies. For player 1 they are denoted by:
\[
\sigma_{1}\in\Delta_{1}=\left\{ \sigma:M^{T}\rightarrow\mathbb{R}_{+}\left\vert \sum_{s\in S_{1}}\sigma\left(s\right)=1\right.\right\} 
\]
and for player 2 by $\sigma_{2}\in\Delta_{2}=\left\{ \sigma:A^{M}\rightarrow\mathbb{R}_{+}\left\vert \sum_{s\in S_{2}}\sigma\left(s\right)=1\right.\right\} $.
A \emph{mixed strategy profile} is a pair $\sigma=\left(\sigma_{1},\sigma_{2}\right)\in\square\equiv\Delta_{1}\times\Delta_{2}$. 

Any play $w$ in $G$ will be an ordered triple of type, message,
and action, that is, $w=\left(t,m,a\right)\in W\equiv T\times M\times A$.
An \emph{outcome} for a signaling game is a probability distribution
$\mu:W\rightarrow\mathbb{R}_{+}$ (such that $\sum_{\left(t,m,a\right)\in W}\mu\left(t,m,a\right)=1$)
over plays that is induced by some mixed strategy profile $\sigma=\left(\sigma_{1},\sigma_{2}\right)\in\square$.
The set of these outcomes will be denoted by $\Theta$.\footnote{\,Note that $\Theta$ is a proper subspace of the set of all probability
distributions on plays, because distinct players make their decisions
independently.}

The normal form for a signaling game $G$ is given by:
\[
\Gamma=\left(S_{1}\times S_{2},u\right)
\]

The definition of generic signaling games will be inspired by a result
of Kreps and Wilson (1982). Their Theorem 2 states that finite generic
extensive form games have only finitely many equilibrium outcomes.
On the other hand, Kohlberg and Mertens (1986) showed that the set
of Nash equilibria for every finite game consists of finitely many
closed and connected components. It follows that in generic extensive
form games, outcomes must be constant across every component of equilibrium.
For, if equilibrium outcomes would vary across a component then, because
the component is connected and the mapping from strategy profiles
to outcomes is continuous, there would be a continuum of equilibrium
outcomes. Therefore, in this paper a \emph{generic signaling game}
is defined as a signaling game where equilibrium outcomes are constant
across every component of equilibria.

A \emph{signaling game with costly monitoring} (SGCM) is a modification
of the signaling game $G$: The receiver does not see the sender's
message unless he decides to pay a cost $c>0$. If he does so, he
perfectly observes the sender's message. This new game $G_{c}=\left(T,M,A,p,u,c\right)$
has an additional object $c,$ which is the decision for player 2
to pay some cost $c>0$ and monitor player 1's action. Unlike in $G$,
however, the cost to monitor ($c>0$) is subtracted from player 2's
payoffs $u_{2}$. The order of play in the SGCM is as follows: 
\begin{enumerate}
\item Player 1 is assigned a type $t\in T$ according to the probability
vector $p$ by chance.
\item Player 1 then observes her type and sends a message $m\in M$ to player
2 conditional on her type. 
\item Player 2 decides simultaneously whether he would like to monitor player
1's message for a cost $c$.
\item Player 2 then observes the message only if he has previously decided
to monitor, and then chooses an action $a\in A$, ending the game.
If player 2 has not monitored, he skips the observation and goes straight
to picking some action.
\end{enumerate}
Since the decision to monitor only affects the payoffs of player 2,
player 1's payoffs are identical in both the underlying signaling
game and the SGCM. Thus player 1's pure strategies remain the same,
$S_{1}^{c}=S_{1}=M^{T}$. However, player 2's pure strategies become
the elements of $S_{2}^{c}\equiv\left\{ 0,1\right\} \times A^{M}\times A$.
The first coordinate is the decision to monitor ($s_{21}=1$) or not
($s_{21}=0$). The second is the action conditional on the message
sent by player 1. The third coordinate is the action independent of
player 1's message taken when the receiver did not monitor. Mixed
strategies for the players, $\sigma_{1}\in\Delta_{1}$ and $\sigma_{2}\in\Delta_{2}^{c}=\left\{ \sigma:\left\{ 0,1\right\} \times A^{M}\times A\rightarrow\mathbb{R}_{+}\left\vert \sum_{s\in S_{2}^{c}}\sigma\left(s\right)=1\right.\right\} $,
and mixed strategy profiles $\sigma=\left(\sigma_{1},\sigma_{2}\right)\in\square_{c}\equiv\Delta_{1}\times\Delta_{2}^{c}$
are defined accordingly. The normal form of the SGCM $G_{c}$ is defined
as:
\[
\Gamma_{c}=\left(S_{1}\times S_{2}^{c},\left(u_{1},u_{2}^{c}\right)\right)
\]
where $u_{2}^{c}\left(s\right)=u_{2}\left(s\right)-c\cdot s_{21}$
for all $s\in S_{1}\times S_{2}^{c}\equiv\square_{c}$.

Any play becomes a quadruple $\left(t,m,s_{21},a\right)\in W_{c}\equiv T\times M\times\left\{ 0,1\right\} \times A$,
where $s_{21}$ is the decision to monitor given the cost. An outcome
in a SGCM is a probability distribution $\mu_{c}:W_{c}\rightarrow\mathbb{R}_{+}$
(such that $\sum_{w\in W_{c}}\mu\left(w\right)=1$) that is induced
by some mixed strategy profile $\sigma\in\square_{c}$. This set of
outcomes will be denoted by $\Theta_{c}$. For these outcomes to project
to those of the underlying signaling game the projection must integrate
out the decision to monitor. Therefore, every outcome of a SGCM naturally
projects to an outcome of the underlying signaling game via the function
$\pi:\Theta_{c}\rightarrow\Theta$ defined by: 
\[
\pi\left(\mu_{c}\right)\left(t,m,a\right)=\mu_{c}\left(t,m,0,a\right)+\mu_{c}\left(t,m,1,a\right)\text{ for all }\left(t,m,a\right)\in W
\]

In general, two distinct strategies of the same player are \emph{strategically
equivalent} if they induce the same payoffs for all players for all
strategy profiles among the other players. If you reduce all strategically
equivalent strategies down to a single representative for all players
you create a unique \emph{purely reduced normal form}. Naturally,
in $G_{c}$ when player 2 chooses to not monitor he reaches an information
set $h_{0}$ and chooses an action $a_{0}\in A$. Likewise the same
is true when player 2 decides to monitor, he reaches an information
set $h_{m}$, dependent on the message, and chooses some action $a_{m}\in A$.
(Note that there are as many information sets $h_{m}$ as there are
messages, but there is only one $h_{0}$). It follows that since these
information sets are only reached by the decision to monitor, one
information set is always unreached. This is $h_{m}$ when player
2 monitors and $h_{0}$ when he does not monitor. All strategies that
differ only at the unreached information sets are strategically equivalent.
All strategies that differ at $h_{0}$ when player 2 monitors are
strategically equivalent, as are all strategies that differ at any
$h_{m}$ when player 2 does not monitor. Collapsing strategically
equivalent strategies for all players to single representatives gives
the purely reduced normal form $\tilde{\Gamma}_{c}=\left(\tilde{S}_{1}\times\tilde{S}_{2}^{c},\left(u_{1},u_{2}^{c}\right)\right)$
where of course $\tilde{S}_{1}=S_{1}$. The set $\tilde{S}_{2}^{c}$
of the receiver's pure strategies can be partitioned as follows: 
\begin{eqnarray*}
\tilde{S}_{2}^{c} & = & \tilde{S}_{20}^{c}\cup\tilde{S}_{21}^{c}\text{, \ }\tilde{S}_{20}^{c}\cap\tilde{S}_{21}^{c}=\emptyset\text{,}\\
\tilde{S}_{20}^{c} & = & \left\{ s_{2}\in\tilde{S}_{2}^{c}\left\vert s_{21}=0\right.\right\} \text{, \ }\tilde{S}_{21}^{c}=\left\{ s_{2}\in\tilde{S}_{2}^{c}\left\vert s_{21}=1\right.\right\} 
\end{eqnarray*}
This partitions $\tilde{S}_{2}^{c}$ into $\tilde{S}_{20}^{c}$, the
set of strategies that do not monitor and move on to $h_{0}$, and
$\tilde{S}_{21}^{c}$, the set of all strategies that do monitor and
thus move on to some $h_{m}$. The key observation from this is that
$\tilde{S}_{21}^{c}$ is bijective to $S_{2}$, the bijection being
$\left(1,\left(a_{m}\right)_{m\in M}\right)\mapsto\left(a_{m}\right)_{m\in M}$.
Furthermore, each strategy in $\tilde{S}_{20}^{c}$ also correspond
to strategies in $S_{2}$, those being all strategies in $S_{2}$
that act independently of the message. Those are the strategies that
pick the action $a$ after \emph{every} message. These strategies
will give the exact same payoffs in $\tilde{S}_{20}^{c}$ as in $S_{2}$. 

\subsection{The Beer-Quiche Game}

An example of a generic signaling game is the Beer-Quiche game borrowed
from Cho and Kreps (1987). This game will illustrate the methodology
of this paper. The Beer-Quiche game has two players, player 1 (Anne)
and player 2 (Bob). Bob has the desire to duel Anne but does not know
whether she is strong (S) or weak (W). He only wants to duel Anne
if she is weak for he fears that otherwise he will lose. It is known
by both players that Anne is the strong type 90\% of the time and
weak 10\% of the time. Each of these types has a favorite breakfast:
beer\ (B) for the strong type and quiche\ (Q) for the weak type.
Bob then observes which breakfast Anne consumes and decides whether
to challenge Anne to a duel\ (F) or not to challenge Anne to a duel
(N).

\begin{table}[t]
\centering{}%
\begin{tabular}{|c|c|c|c|c|}
\hline 
 & BB & BQ & QB & QQ\tabularnewline
\hline 
\hline 
FF & $\begin{array}{cc}
0.9\\
 & 0.1
\end{array}$ & $\begin{array}{cc}
1\\
 & 0.1
\end{array}$ & $\begin{array}{cc}
0\\
 & 0.1
\end{array}$ & $\begin{array}{cc}
0.1\\
 & 0.1
\end{array}$\tabularnewline
\hline 
FN & $\begin{array}{cc}
2.9\\
 & 0.9
\end{array}$ & $\begin{array}{cc}
2.8\\
 & 1
\end{array}$ & $\begin{array}{cc}
0.2\\
 & 0
\end{array}$ & $\begin{array}{cc}
0.1\\
 & 0.1
\end{array}$\tabularnewline
\hline 
NF & $\begin{array}{cc}
0.9\\
 & 0.1
\end{array}$ & $\begin{array}{cc}
1.2\\
 & 0
\end{array}$ & $\begin{array}{cc}
1.8\\
 & 1
\end{array}$ & $\begin{array}{cc}
2.1\\
 & 0.9
\end{array}$\tabularnewline
\hline 
NN & $\begin{array}{cc}
2.9\\
 & 0.9
\end{array}$ & $\begin{array}{cc}
3\\
 & 0.9
\end{array}$ & $\begin{array}{cc}
2\\
 & 0.9
\end{array}$ & $\begin{array}{cc}
2.1\\
 & 0.9
\end{array}$\tabularnewline
\hline 
\end{tabular}\caption{\label{tab:BQG}Normal form of the bee\textcolor{black}{r-q}uiche
game.}
\end{table}

As in the generic signaling game previously described, the beer-quiche
game is also defined by the same five objects: types, messages, actions,
a probability distribution of types, and a payoff function. The sets
of each of these objects are: types $T=\left\{ S,W\right\} $, messages
$M=\left\{ B,Q\right\} $, actions $A=\left\{ F,N\right\} $, and
a probability distribution $p\left(S\right)=0.9$ and $p\left(W\right)=0.1$.

The order of play is as follows: 
\begin{enumerate}
\item Anne is assigned a type, strong or weak, with the probability of ``strong''
being 0.9 and ``weak'' 0.1. 
\item Anne observes her type, then chooses a breakfast, either beer or quiche.
\item Bob then observes the breakfast (message) of Anne.
\item Bob then chooses either to duel or not to duel, which ends the game.
\end{enumerate}
The payoffs of this game can be seen in the extensive form shown in
Figure \ref{fig:BQG}, with the root being in the middle. For Anne,
a payoff of 1 is granted if the favorite breakfast of her type is
chosen (beer for the strong and quiche for the weak type). Anne also
has a sense of self preservation, gaining a payoff of 2 for not dueling.
For Bob a payoff of 1 is achieved if the weak type is dueled or the
strong type is not dueled, otherwise the payoff is zero.

\begin{figure}[t]
\begin{centering}
\includegraphics{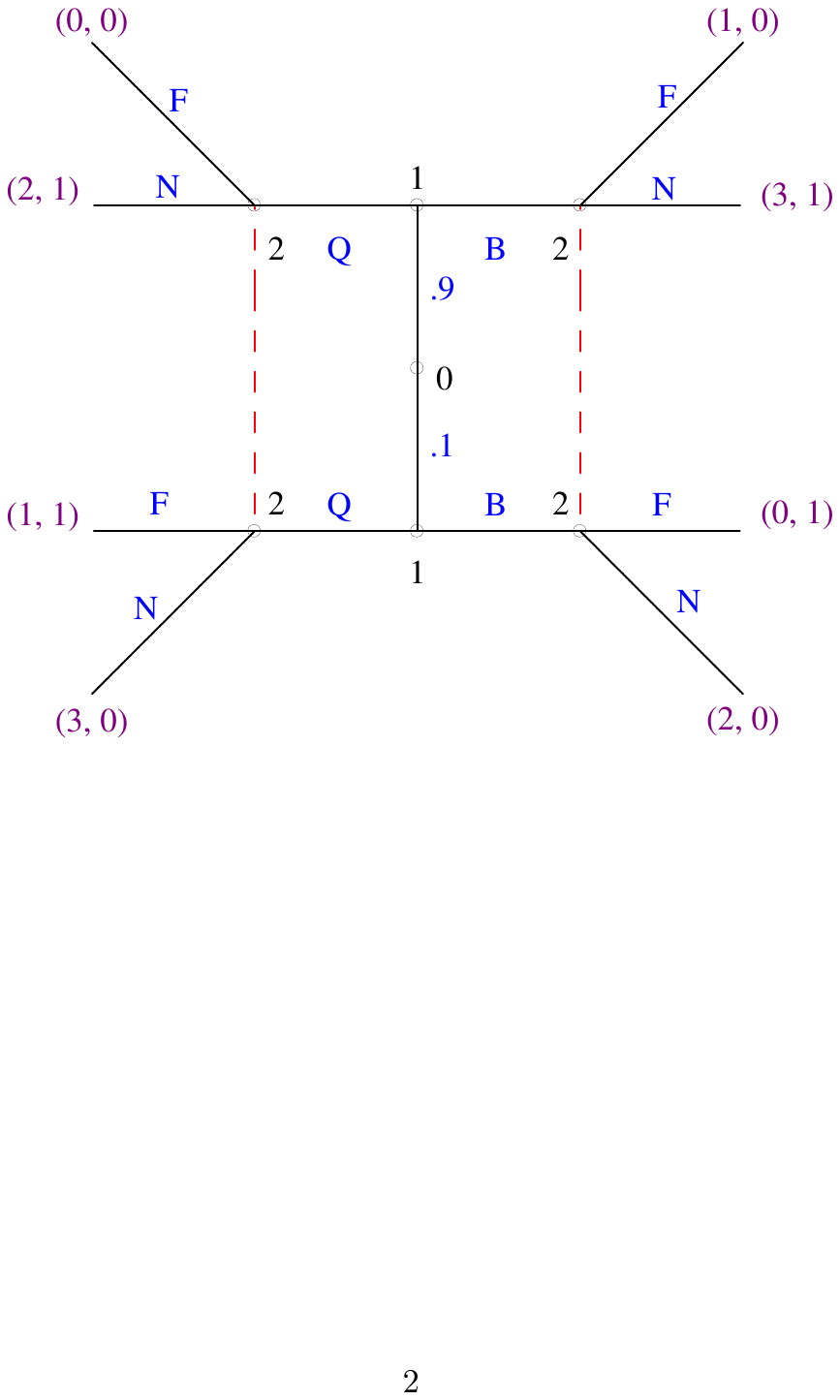}
\par\end{centering}
\caption{\label{fig:BQG}beer-quiche game in extensive form.}
\end{figure}

There are two equilibrium outcomes of this game both being pooling
equilibria. The first has Anne choosing quiche (QQ) regardless of
her type. Bob does not learn anything and responds by not dueling,
because the probability of Anne being a strong type is high. For this
equilibrium to hold, Bob must also choose to fight, when he sees Anne
drink beer, with $\Pr(F\mid B)\geq0.5$. Hence, the equilibrium component
supporting the outcome $0.9\cdot\left(S,Q,N\right)+0.1\cdot\left(W,Q,N\right)$
is a 1-dimensional line segment at the boundary of the 6-dimensional
strategy polyhedron. The result is an average payoff of 2.1 for Anne
and 0.9 for Bob. 

In the other equilibrium Anne chooses beer regardless of her type
(BB). In response Bob would choose not to fight when he sees beer.
Knowing that only the weak type has an incentive to deviate to quiche,
he would choose to fight with probability at least one-half when he
sees quiche. Hence, the equilibrium component supporting the outcome
$0.9\cdot\left(S,B,N\right)+0.1\cdot\left(W,B,N\right)$ is a 1-dimensional
line segment at the boundary of the 6-dimensional strategy polyhedron.
The result is an average payoff of 2.9 for Anne and 0.9 for Bob. Table
\ref{tab:BQG} depicts the normal form of the Beer-Quiche game, where
Anne plays columns and Bob rows. (The upper left entry is Anne's payoff
and the lower right Bob's.)

The modification of this game introduces a cost for Bob to monitor
Anne's message (breakfast). This can be imagined as any number of
scenarios. For example, Anne could consumes her breakfast in private
and Bob has to pay some cost $c>0$ to a private investigator to observe
Anne's breakfast. This modifies the extensive form as depicted in
Figure \ref{fig:BQSGCM}. (The root is again in the middle.)

\begin{table}
\centering{}%
\begin{tabular}{|c|c|c|c|c|}
\hline 
 & BB & BQ & QB & QQ\tabularnewline
\hline 
\hline 
CFFF & $\begin{array}{cc}
0.9\\
 & 0.1-c
\end{array}$ & $\begin{array}{cc}
1\\
 & 0.1-c
\end{array}$ & $\begin{array}{cc}
0\\
 & 0.1-c
\end{array}$ & $\begin{array}{cc}
0.1\\
 & 0.1-c
\end{array}$\tabularnewline
\hline 
CNFF & $\begin{array}{cc}
0.9\\
 & 0.1-c
\end{array}$ & $\begin{array}{cc}
1\\
 & 0.1-c
\end{array}$ & $\begin{array}{cc}
0\\
 & 0.1-c
\end{array}$ & $\begin{array}{cc}
0.1\\
 & 0.1-c
\end{array}$\tabularnewline
\hline 
CFFN & $\begin{array}{cc}
2.9\\
 & 0.9-c
\end{array}$ & $\begin{array}{cc}
2.8\\
 & 1-c
\end{array}$ & $\begin{array}{cc}
0.2\\
 & -c
\end{array}$ & $\begin{array}{cc}
0.1\\
 & 0.1-c
\end{array}$\tabularnewline
\hline 
CNFN & $\begin{array}{cc}
2.9\\
 & 0.9-c
\end{array}$ & $\begin{array}{cc}
1.2\\
 & -c
\end{array}$ & $\begin{array}{cc}
0.2\\
 & -c
\end{array}$ & $\begin{array}{cc}
0.1\\
 & 0.1-c
\end{array}$\tabularnewline
\hline 
CFNN & $\begin{array}{cc}
2.9\\
 & 0.9-c
\end{array}$ & $\begin{array}{cc}
3\\
 & 0.9-c
\end{array}$ & $\begin{array}{cc}
2\\
 & 0.9-c
\end{array}$ & $\begin{array}{cc}
2.1\\
 & 0.9-c
\end{array}$\tabularnewline
\hline 
CNNN & $\begin{array}{cc}
2.9\\
 & 0.9-c
\end{array}$ & $\begin{array}{cc}
3\\
 & 0.9-c
\end{array}$ & $\begin{array}{cc}
2\\
 & 0.9-c
\end{array}$ & $\begin{array}{cc}
2.1\\
 & 0.9-c
\end{array}$\tabularnewline
\hline 
CFNF & $\begin{array}{cc}
0.9\\
 & 0.1-c
\end{array}$ & $\begin{array}{cc}
1.2\\
 & -c
\end{array}$ & $\begin{array}{cc}
1.8\\
 & 1-c
\end{array}$ & $\begin{array}{cc}
2.1\\
 & 0.9-c
\end{array}$\tabularnewline
\hline 
CNNF & $\begin{array}{cc}
0.9\\
 & 0.1-c
\end{array}$ & $\begin{array}{cc}
1.2\\
 & -c
\end{array}$ & $\begin{array}{cc}
1.8\\
 & 1-c
\end{array}$ & $\begin{array}{cc}
2.1\\
 & 0.9-c
\end{array}$\tabularnewline
\hline 
0FFF & $\begin{array}{cc}
0.9\\
 & 0.1
\end{array}$ & $\begin{array}{cc}
1\\
 & 0.1
\end{array}$ & $\begin{array}{cc}
0\\
 & 0.1
\end{array}$ & $\begin{array}{cc}
0.1\\
 & 0.1
\end{array}$\tabularnewline
\hline 
0FFN & $\begin{array}{cc}
0.9\\
 & 0.1
\end{array}$ & $\begin{array}{cc}
1\\
 & 0.1
\end{array}$ & $\begin{array}{cc}
0\\
 & 0.1
\end{array}$ & $\begin{array}{cc}
0.1\\
 & 0.1
\end{array}$\tabularnewline
\hline 
0FNF & $\begin{array}{cc}
0.9\\
 & 0.1
\end{array}$ & $\begin{array}{cc}
1\\
 & 0.1
\end{array}$ & $\begin{array}{cc}
0\\
 & 0.1
\end{array}$ & $\begin{array}{cc}
0.1\\
 & 0.1
\end{array}$\tabularnewline
\hline 
0FNN & $\begin{array}{cc}
0.9\\
 & 0.1
\end{array}$ & $\begin{array}{cc}
1\\
 & 0.1
\end{array}$ & $\begin{array}{cc}
0\\
 & 0.1
\end{array}$ & $\begin{array}{cc}
0.1\\
 & 0.1
\end{array}$\tabularnewline
\hline 
0NFF & $\begin{array}{cc}
2.9\\
 & 0.9
\end{array}$ & $\begin{array}{cc}
3\\
 & 0.9
\end{array}$ & $\begin{array}{cc}
2\\
 & 0.9
\end{array}$ & $\begin{array}{cc}
2.1\\
 & 0.9
\end{array}$\tabularnewline
\hline 
0NNF & $\begin{array}{cc}
2.9\\
 & 0.9
\end{array}$ & $\begin{array}{cc}
3\\
 & 0.9
\end{array}$ & $\begin{array}{cc}
2\\
 & 0.9
\end{array}$ & $\begin{array}{cc}
2.1\\
 & 0.9
\end{array}$\tabularnewline
\hline 
0NFN & $\begin{array}{cc}
2.9\\
 & 0.9
\end{array}$ & $\begin{array}{cc}
3\\
 & 0.9
\end{array}$ & $\begin{array}{cc}
2\\
 & 0.9
\end{array}$ & $\begin{array}{cc}
2.1\\
 & 0.9
\end{array}$\tabularnewline
\hline 
0NNN & $\begin{array}{cc}
2.9\\
 & 0.9
\end{array}$ & $\begin{array}{cc}
3\\
 & 0.9
\end{array}$ & $\begin{array}{cc}
2\\
 & 0.9
\end{array}$ & $\begin{array}{cc}
2.1\\
 & 0.9
\end{array}$\tabularnewline
\hline 
\end{tabular}\caption{\label{tab:BQSGCMNF}Normal form for the beer quiche game with costly
monitoring.}
\end{table}

The order of play in this modified game is as follows: 
\begin{enumerate}
\item Anne is assigned a type $t\in T=\left\{ W,S\right\} $ according to
the probability vector $p=\left(0.9,0.1\right)$ by chance.
\item Anne then observes her type and chooses a breakfast $m\in M=\left\{ B,Q\right\} $. 
\item Bob then simultaneously decides whether or not he will pay the cost
$c>0$ to discover Anne's message (breakfast).
\item Bob then observes the message only if he has previously decided to
monitor; in either case he then chooses an action $a\in A$, ending
the game.\footnote{If Bob has not monitored, he skips the observation and goes straight
to picking some action.}
\end{enumerate}
This modification only alters the pure strategies of Bob, as he has
the extra decision to pay the cost $c>0$ and to monitor. Anne's pure
strategies however remain unchanged. Table \ref{tab:BQSGCMNF} gives
the unreduced normal form of this SGCM. (Anne is the column player
and Bob chooses a row, the upper left is Anne's payoff and the lower
right Bob's.)

\begin{figure}[t]
\centering{}\includegraphics{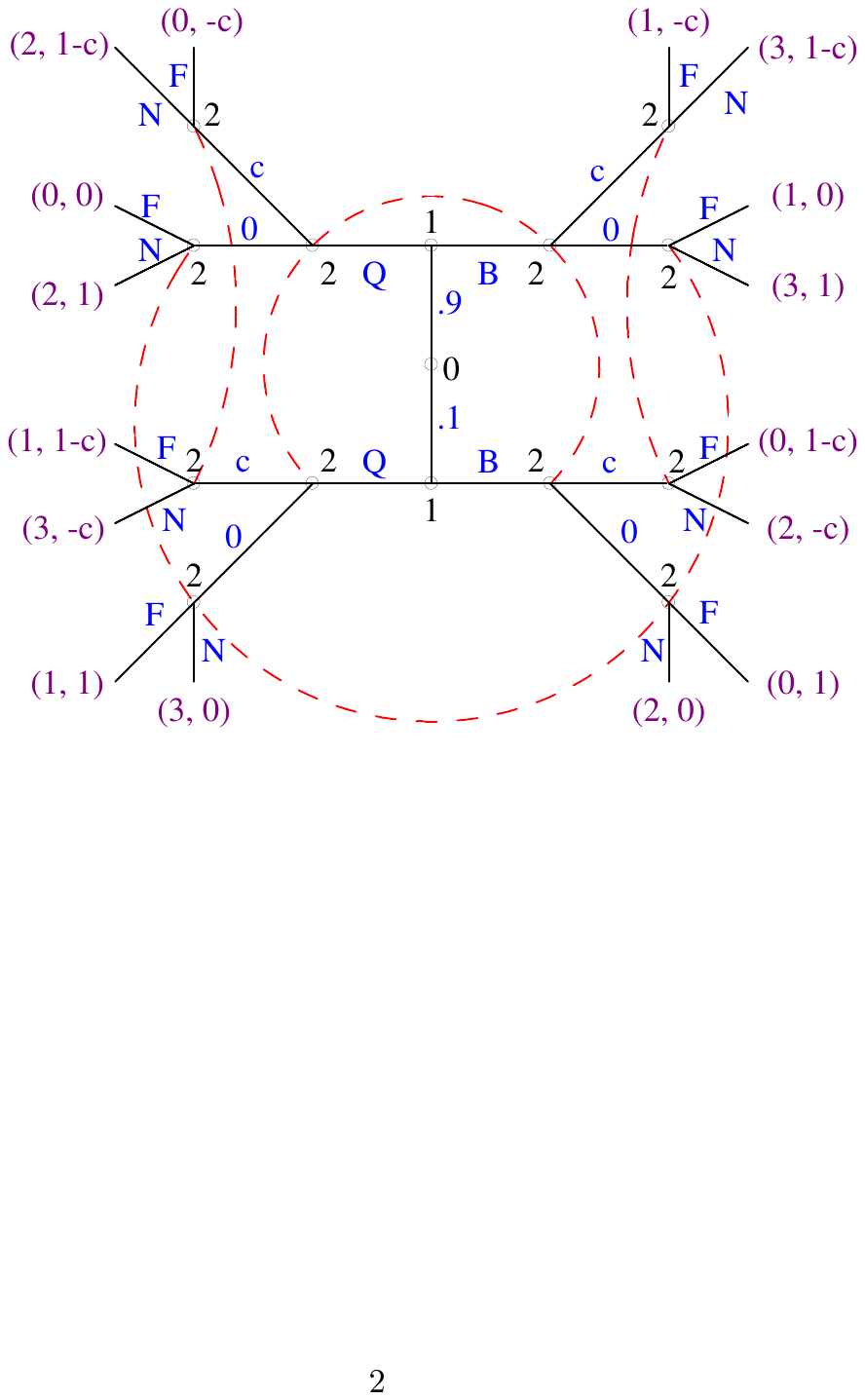}\caption{\label{fig:BQSGCM}Extensive form of the Beer-Quiche game with costly
monitoring.}
\end{figure}
 All strategies for Bob in this normal form are functions from four
information sets, $\hat{h}$, $h_{B}$, $h_{Q}$, and $h_{0}$, to
choices. As there are four information sets, the values of these functions
are a 4-dimentional vector, each coordinate representing a choice
at one information set. Here $\hat{h}$ is the information set where
Bob decides whether or not to monitor. Afterwards Bob will reach one
of the remaining 3 information sets. $h_{0}$ is the information set
reached when Bob does not monitor and thus does not observe a message.
$h_{B}$ is the information set reached when Bob observes beer. $h_{Q}$
is the information set reached when Bob observes quiche. Each strategy
can be written as a 4-dimentional vector $s_{2}\in S_{2}^{c}=\left\{ 0,1\right\} \times A^{\left\{ B,Q\right\} }\times A$.
The first coordinate is the decision to monitor $s_{21}\in\left\{ 0,1\right\} $
at $\hat{h}$. The second and third is the action $a\in A=\left\{ F,N\right\} $
at $h_{Q}$ and $h_{B}$, respectively. The last coordinate is the
action $a\in A=\left\{ F,N\right\} $ at $h_{0}$.

When Bob decides not to monitor, only the information set $h_{0}$
is reached. However in Table \ref{tab:BQSGCMNF} there are strategies
that are differentiated by decisions at the unreached information
sets $h_{B}$ and $h_{Q}$. These strategies are strategically equivalent
and can be reduced as they cause the same outcome for all players,
given a strategy of the opponent. The same can be said of strategies
where Bob decides to monitor that differ at the unreached information
set $h_{0}$. We can see, for example, that the outcomes for CFFF
are identical to those for CNFF.

\begin{table}
\centering{}%
\begin{tabular}{|c|c|c|c|c|}
\hline 
 & BB & BQ & QB & QQ\tabularnewline
\hline 
\hline 
C{*}FF & $\begin{array}{cc}
0.9\\
 & 0.1-c
\end{array}$ & $\begin{array}{cc}
1\\
 & 0.1-c
\end{array}$ & $\begin{array}{cc}
0\\
 & 1-c
\end{array}$ & $\begin{array}{cc}
0.1\\
 & 0.1-c
\end{array}$\tabularnewline
\hline 
C{*}FN & $\begin{array}{cc}
2.9\\
 & 0.9-c
\end{array}$ & $\begin{array}{cc}
2.8\\
 & 1-c
\end{array}$ & $\begin{array}{cc}
0.2\\
 & -c
\end{array}$ & $\begin{array}{cc}
0.1\\
 & 0.1-c
\end{array}$\tabularnewline
\hline 
C{*}NF & $\begin{array}{cc}
0.9\\
 & 0.1-c
\end{array}$ & $\begin{array}{cc}
1.2\\
 & -c
\end{array}$ & $\begin{array}{cc}
1.8\\
 & 1-c
\end{array}$ & $\begin{array}{cc}
2.1\\
 & 0.9-c
\end{array}$\tabularnewline
\hline 
C{*}NN & $\begin{array}{cc}
2.9\\
 & 0.9-c
\end{array}$ & $\begin{array}{cc}
3\\
 & 0.9-c
\end{array}$ & $\begin{array}{cc}
2\\
 & 0.9-c
\end{array}$ & $\begin{array}{cc}
2.1\\
 & 0.9-c
\end{array}$\tabularnewline
\hline 
0F{*}{*} & $\begin{array}{cc}
0.9\\
 & 0.1
\end{array}$ & $\begin{array}{cc}
1\\
 & 0.1
\end{array}$ & $\begin{array}{cc}
0\\
 & 0.1
\end{array}$ & $\begin{array}{cc}
0.1\\
 & 0.1
\end{array}$\tabularnewline
\hline 
0N{*}{*} & $\begin{array}{cc}
2.9\\
 & 0.9
\end{array}$ & $\begin{array}{cc}
3\\
 & 0.9
\end{array}$ & $\begin{array}{cc}
2\\
 & 0.9
\end{array}$ & $\begin{array}{cc}
2.1\\
 & 0.9
\end{array}$\tabularnewline
\hline 
\end{tabular}\caption{\label{Reduced normal form for beer-quiche game}The reduced normal
form for the beer-quiche game with costly monitoring.}
\end{table}

This creates Table \ref{Reduced normal form for beer-quiche game},
the reduced normal form of the beer-quiche game with costly monitoring.
In the reduced normal form all strategically equivalent strategies
have been collapsed into single representatives. 

Note that if we partition Bobs strategies by his decision to monitor,
those strategies where he monitors are bijective to strategies in
the underlying game and those strategies where Bob does not monitor
are identical to strategies in the underlying game under which Bob
choose the same action regardless of the breakfast he observes. 

Now consider this game when the cost to monitor is zero. Strategies
where Bob monitors are now identical to the underlying beer-quiche
game. The pure strategies, where Bob does not monitor, are also now
duplicates of the strategies where Bob chooses the same action regardless
of message (breakfast) that he sees. This means that this is no longer
the reduced normal form, as these strategies are strategically equivalent.

\section{Equilibria }

This section is broken up into two parts, the main result and an application
of the result to the example.

\subsection{Results }

This section will prove that for every non-zero index component $C$
in a generic signaling game $G$ there exists in the modified game
with costly monitoring a non-zero index component $C_{1}$ such that,
so long as the monitoring cost is sufficiently small, the equilibrium
outcome of the latter component will be found within a small neighborhood
of the equilibrium outcome of the former.

This result is achieved by first looking at the reduced normal form
of the SGCM evaluated at a cost equal to zero. This reduced normal
form is also one of the possible normal forms of the underlying signaling
game $G$, though with duplicate strategies. By Proposition 6.8(a)
of Ritzberger (2002, p.\ 324) any non-zero index equilibrium component
in a game with duplicate strategies will contain a component with
non-zero index for the game without duplicate strategies. Therefore,
a non-zero index equilibrium component of the SGCM evaluated at a
cost equal to zero contains a non-zero index component of the underlying
signaling game. Given that the underlying signaling game is generic,
it can be deduced from Theorem 2 of Kreps and Wilson (1982) that equilibrium
outcomes must be constant across every component. This implies that
the equilibrium outcome of the non-zero index component in the signaling
game with zero monitoring cost must project to an equilibrium outcome
of the original signaling game. Given that non-zero index components
are essential (Wu Wen-Ts\"un and Jiang Jia-He, 1962), when there
is a sufficiently small monitoring cost, the projected equilibrium
outcome of the signaling game with costly monitoring is close to the
equilibrium outcome associated with the non-zero index components
in the game without costly monitoring.
\begin{thm}
Let $G=\left(T,M,A,p,u\right)$ be a generic signaling game and $\left\{ G_{c}\right\} _{c>0}$
the associated family of signaling game with costly monitoring, that
is, $G_{c}=\left(T,M,A,p,u,c\right)$ with $c>0$. Then there exists
an equilibrium outcome $\mu^{*}$ for $G$ such that for every $\varepsilon>0$
there is a cost $c_{\varepsilon}>0$ with the property that, whenever
$0<c<c_{\varepsilon}$, there is an equilibrium outcome $\mu^{c}$
for $G_{c}$ satisfying $\left\Vert \pi\left(\mu^{c}\right)-\mu^{*}\right\Vert <\varepsilon$.
\end{thm}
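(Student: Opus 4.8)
The plan is to run the whole argument inside index theory, using three closely related normal forms: the normal form $\Gamma$ of the underlying game $G$, the reduced normal form $\tilde{\Gamma}_{c}$ of the SGCM for $c>0$, and the game $\tilde{\Gamma}_{0}$ obtained by evaluating the payoffs of $\tilde{\Gamma}_{c}$ at $c=0$ while keeping the strategy set of $\tilde{\Gamma}_{c}$ fixed. The first thing I would record is that $\tilde{\Gamma}_{0}$ is exactly $\Gamma$ with duplicated strategies: by the bijection noted in Section 2.1 the receiver's monitoring strategies $\tilde{S}_{21}^{c}$ reproduce $S_{2}$, and at $c=0$ each non-monitoring strategy in $\tilde{S}_{20}^{c}$ becomes a duplicate of the message-independent strategy it corresponds to in $\tilde{S}_{21}^{c}$. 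Crucially, $\tilde{\Gamma}_{c}$ and $\tilde{\Gamma}_{0}$ share the same strategy space, and the payoffs of $\tilde{\Gamma}_{c}$ converge to those of $\tilde{\Gamma}_{0}$ as $c\to 0$, so $\tilde{\Gamma}_{c}$ is a genuine payoff-perturbation of $\tilde{\Gamma}_{0}$ within a fixed strategy space.

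With this setup I would locate the robust component in four steps. First, by the index-sum theorem the indices of the equilibrium components of the finite game $\tilde{\Gamma}_{0}$ sum to $+1$, so at least one component $C_{0}$ has non-zero index. Second, since $\tilde{\Gamma}_{0}$ is $\Gamma$ with duplicates, Proposition 6.8(a) of Ritzberger (2002) guarantees that $C_{0}$ contains a component $C$ of $\Gamma$ whose index is likewise non-zero. Third, because $G$ is generic, Theorem 2 of Kreps and Wilson (1982) forces equilibrium outcomes to be constant across $C$; call this common value $\mu^{*}$, an equilibrium outcome of $G$. Fourth, the embedding of $C$ into $\tilde{\Gamma}_{0}$ uses monitoring strategies that reproduce $G$ with the monitoring decision fixed at one, so $\pi$ carries the outcome of that embedded copy to $\mu^{*}$.

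The perturbation step then delivers the theorem. Non-zero index components are essential (Wu Wen-Ts\"un and Jiang Jia-He, 1962), so for every neighborhood of $C_{0}$ there is a threshold below which any sufficiently small payoff-perturbation of $\tilde{\Gamma}_{0}$ has an equilibrium in that neighborhood. Applying this to $\tilde{\Gamma}_{c}$ yields, for all small enough $c$, an equilibrium $\sigma^{c}$ of $\tilde{\Gamma}_{c}$ arbitrarily close to $C_{0}$, inducing an outcome $\mu^{c}\in\Theta_{c}$. Since the map from strategy profiles to outcomes depends only on play probabilities and not on payoffs, it is a single continuous map, and $\pi$ is continuous; hence $\pi(\mu^{c})$ converges to the projection of the outcome of $C_{0}$ as $c\to 0$. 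Choosing $c_{\varepsilon}$ small enough to bring this within $\varepsilon$ gives the claim.

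The hard part will be showing that $\pi$ sends the outcome of \emph{every} equilibrium in $C_{0}$ to the single value $\mu^{*}$, and not merely to outcomes near it, since $C_{0}$ may contain profiles in which the duplicate strategies make the receiver indifferent and monitor with probability below one. For this I would argue that the projection of any equilibrium outcome of the zero-cost game $G_{0}$ is itself an equilibrium outcome of $G$: at $c=0$ monitoring is free, so the monitoring option weakly dominates, and on the non-monitoring branch the receiver's constant action can be optimal only when the message is uninformative, which is precisely when that same action is a best reply in $G$. The projected outcome is therefore an equilibrium outcome of $G$, varies continuously over the connected set $C_{0}$, and takes only finitely many values by genericity, so it is constant and equal to $\mu^{*}$. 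The only remaining check is that the cost term $-c\cdot s_{21}$ qualifies as an admissible perturbation for the essentiality theorem, which is immediate as it is a continuous, vanishing change of the receiver's payoffs on a fixed strategy space.
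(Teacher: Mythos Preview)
Your proposal is correct and follows essentially the same route as the paper: recognize $\tilde{\Gamma}_{0}$ as $\Gamma$ with duplicate strategies, use Proposition~6.8(a) of Ritzberger (2002) to pass between non-zero index components, invoke genericity for constancy of outcomes, and then apply essentiality of non-zero index components to the payoff perturbation $\tilde{\Gamma}_{c}$. The paper dispatches your ``hard part'' more directly by simply noting that $\tilde{\Gamma}_{0}$ is itself a normal form representation of the extensive game $G$, so the genericity hypothesis (constant outcomes across each component) applies immediately to all of $C_{0}$, making your separate argument about projected outcomes being equilibrium outcomes of $G$ unnecessary.
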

\begin{proof}
Consider now $\tilde{\Gamma}_{0}$, that is, the reduced normal form
of the SGCM but evaluated at $c=0$. This is not a reduced normal
form anymore, because the strategies in $\tilde{S}_{20}^{0}$ are
copies of the strategies in $\tilde{S}_{21}^{0}$ that, after observing
the sender's message, still choose the same action irrespective of
the message received. Yet, $\tilde{\Gamma}_{0}$ is still a normal
form game, and in fact one in which the strategies in $\tilde{S}_{21}^{0}$
are not only bijective to $S_{2}$ but also give exactly the same
payoffs to both players as in $\Gamma$. That is, $\tilde{\Gamma}_{0}$
\emph{embeds} $\Gamma$ and $\tilde{\Gamma}_{0}$ has the same equilibria
as $\Gamma$, modulo duplicate strategies.

This implies that $\tilde{\Gamma}_{0}$ is a normal form associated
with the extensive form game $G$ and with the SGCM $G_{0}$, evaluated
at $c=0$. By Proposition 6.8(a) of Ritzberger (2002, p.\ 324) a
component of equilibria with non-zero index in a game with duplicate
strategies must contain a component of equilibria with non-zero index
of the modified game where the duplicated strategies have been deleted.
A non-zero index component $C_{0}$ in $\tilde{\Gamma}_{0}$ must
therefore contain a non-zero index component $C$ of the underlying
signaling game. Given that $G$ is generic, by definition equilibrium
outcomes must be constant across every component. Since $\tilde{\Gamma}_{0}$
is a normal form representation of the generic extensive form game
$G$, it can be concluded that all equilibrium outcomes must be constant
across every equilibrium component in $\tilde{\Gamma}_{0}$. Given
this, if $\mu^{*}$ is the equilibrium outcome of $C$, then the equilibrium
outcome $\mu^{0}$ of $C_{0}$ must project to $\mu^{*}$, that is,
$\pi\left(\mu^{0}\right)=\mu^{*}$.

By Theorem 4 of Ritzberger (1994) every component with non-zero index
is essential. That is, for every $\nu>0$ there is some $\eta_{\nu}>0$
such that, so long as the perturbed game is within $\eta{}_{\nu}$
of the original, there exists an equilibrium for the perturbed game
within $\nu$ from the essential component, with $\eta{}_{\nu}>0$
, $\nu>0$. Since the mapping from strategies to outcomes is continuous,
for every $\epsilon>0$ there must be a $\delta{}_{\epsilon}>0$ such
that strategies within $\delta_{\epsilon}$ map into outcomes within
$\epsilon$. Given that an increase in cost amounts to a perturbation
of payoff vectors, there exists some level of cost $c_{\epsilon}$
such that a SGCM with $0<c<c_{\epsilon}$ has a payoff vector within
$\eta_{\nu}$ from $\tilde{\Gamma}_{0}$. Therefore in $\tilde{\Gamma}_{0}$,
for $\epsilon>0$ choose $\delta{}_{\epsilon}>0$ such that $\left\Vert \sigma-\sigma'\right\Vert <\delta_{\epsilon}$
implies $\left\Vert \mu\left(\sigma\right)-\mu\left(\sigma'\right)\right\Vert <\epsilon$.
Now if we take a SGCM within $\eta_{\delta_{\epsilon}}>0$ of $\tilde{\Gamma}_{0}$,
then there will be an equilibrium $\sigma^{c}$ for $\tilde{\Gamma}_{c}$
such that $\min_{\sigma\in C_{0}}\left\Vert \sigma^{c}-\sigma\right\Vert <\delta_{\epsilon}$.
Hence, we can deduce that since $\sigma^{c}$ is within $\delta{}_{\epsilon}$
from $C_{0}$, the equilibrium $\sigma^{c}$ must induces an outcome
$\mu^{c}$ within $\epsilon$ from $\mu^{0}$. If we integrate out
the decision to monitor using the projection $\pi$, it follows that
$\left\Vert \pi\left(\mu^{c}\right)-\pi\left(\mu^{0}\right)\right\Vert =\left\Vert \pi\left(\mu^{c}\right)-\mu^{*}\right\Vert <\epsilon$,
as desired.
\end{proof}
This result shows that the implementation of a positive cost to monitor,
so long as these costs are sufficiently small, will result in an equilibrium
outcome close to that of the underlying game. However, it may be possible
to say more when evaluating particular classes of signaling games. 

\subsection{Beer-Quiche Game with Costly Monitoring }

In this subsection the methodology of the proof will be applied to
the Beer-Quiche example from section 2.2. The purpose is to illustrate
the technique behind the result. 

Recall the reduced normal form when the Beer-Quiche game with costly
monitoring is evaluated at $c=0$. The strategies where Bob monitors
also represent the normal form of the original game (see Table \ref{tab:BQSGCMNF}).
The strategies where Bob does not monitor are also duplicates of strategies
in the original game where Bob chooses the same action regardless
of which breakfast Anne chooses. 

Recall from Section 2.2 that Cho and Kreps identified two equilibrium
components of the Beer-Quiche game, those being Beer Beer (BB) and
Quiche Quiche (QQ). Ritzberger (1994) calculated the index for both
these equilibrium components as being +1 and 0, respectively. Therefore,
as the reduced normal form of the modified game at $c=0$ consists
of the normal form of the underlying game with duplicate strategies,
BB will also be found in the reduced normal form. This equilibrium
will be the same as in the underlying game with Anne choosing Beer
Beer and Bob responding by not fighting when he sees Beer and fighting
50\% of the time when he sees quiche. This equilibrium will have the
exact same index. 

When Anne is strong, she will have an incentive to deviate to strategies
that choose beer. Therefore, she chooses either BB or BQ. Given this,
C{*}FF and C{*}NF are always strictly dominated by C{*}FN. So long
as $0<c<8/10$ the strategy 0F{*}{*} is strictly dominated by 0N{*}{*}.
Given that there is uncertainty as to the choice of the weak type
to choose either beer or quiche, then C{*}NN is weakly dominated by
C{*}FN. This can only be a best reply if there is certainty that Anne
will only play BB. However, if this is the case for any $c>0$, both
C{*}FN and C{*}NN will be strictly dominated by 0N{*}{*}. Therefore,
C{*}NN cannot be used in any equilibrium and can be discarded. In
the resulting $2\times2$ game depicted in Table \ref{matching pennies  BQCM}
Bob has only the strategies C{*}FN and 0N{*}{*}. And, if $0<c<1/10$,
this $2\times2$ game is of the Matching Pennies variety and has a
unique equilibrium where both players randomize.

\begin{table}
\centering{}%
\begin{tabular}{|c|c|c|}
\hline 
 & $y$ & $1-y$\tabularnewline
\hline 
\hline 
$x$ & $\begin{array}{cc}
2.9\\
 & 0.9-c
\end{array}$ & $\begin{array}{cc}
2.8\\
 & 1-c
\end{array}$\tabularnewline
\hline 
$1-x$ & $\begin{array}{cc}
2.9\\
 & 0.9
\end{array}$ & $\begin{array}{cc}
3\\
 & 0.9
\end{array}$\tabularnewline
\hline 
\end{tabular}\caption{\label{matching pennies  BQCM}The resulting matching pennies game
of the beer quiche game with costly monitoring.}
\end{table}

At a mixed equilibrium both players must be indifferent between the
pure strategies in the support. Thus, to compute the mixing probabilities,
consider the indifference conditions. Those are for Anne:
\begin{align*}
u_{1}\left(x,1\right) & =2.9\textrm{ and }u_{1}\left(x,0\right)=2.8\cdot x+3\cdot\left(1-x\right)=3-0.2\cdot x\\
u_{1}\left(x,1\right) & =u_{1}\left(x,0\right)\Leftrightarrow x=1/2
\end{align*}
Thus, to keep Anne indifferent, Bob will choose to monitor with probability
of $1/2$ and fight when he sees quiche and not when he sees beer.
He will also choose to not monitor and not fight with probability
$1/2$. For player 2 the indifference condition is:
\begin{align*}
u_{2}\left(1,y\right) & =0.9\cdot y+1-y-c=1-0.1\cdot y-c\textrm{ and }u_{2}\left(0,y\right)=0.9\\
u_{2}\left(1,y\right) & =u_{2}\left(0,y\right)\Leftrightarrow y=1-10\cdot c
\end{align*}
Hence, to keep Bob indifferent, Anne consumes beer irrespective of
her type with probability $y=1-10\cdot c$ and chooses her preferred
breakfast with probability $10\cdot c$. For this to be probabilities
takes $c\leq1/10$. As the cost $c$ goes to zero, Anne will have
beer almost surely.

For a mixed strategy equilibrium to hold each player must be indifferent
between each strategy. Therefore the expected payoffs of this equilibrium
are identical to the BB component in the original beer-quiche game.
That is an expected payoff of 2.9 for player 1 and 0.9 for player
2.

To calculate the distance between the outcomes of the original beer-quiche
game and the modified game with costly monitoring one must note the
probability of plays in each game. In the modified game the plays
$\left(S,B,c,N\right)$ and $\left(S,B,0,N\right)$ occur with probability
$0.45\cdot\left(1-10c\right)$ each. The plays $\left(W,B,c,N\right)$
and $\left(W,B,0,N\right)$ occur with probability $0.05\cdot\left(1-10c\right)$
each, the plays $\left(S,Q,c,F\right)$ and $\left(S,Q,0,N\right)$
occur with probability $4.5\cdot c$ each, and the plays $\left(W,Q,c,F\right)$
and $\left(W,Q,0,N\right)$ occur with probability $0.5\cdot c$ each.
All other plays have probability zero. Integrating out the decision
whether or not to monitor gives probability $0.9\cdot\left(1-10c\right)$
for the play $\left(S,B,N\right)$, probability $0.1\cdot\left(1-10c\right)$
for the play $\left(W,B,N\right)$, probabilities $4.5\cdot c$ for
each of the two plays $\left(S,Q,F\right)$ and $\left(S,Q,N\right)$,
and probabilities $0.5\cdot c$ for each of the two plays $\left(W,Q,F\right)$
and $\left(W,Q,N\right)$. In the original beer-quiche game the BB
equilibrium induces an outcome where the play $\left(S,B,N\right)$
has probability $0.9$, the play $\left(W,B,N\right)$ has probability
$0.1$, and all other plays have probability zero. Computing Euclidean
distance between these two 8-dimensional vectors give $c\sqrt{123}\approx11.1\cdot c$,
which converges to zero as $c$ does. 

Now take the beer-quiche game with costly monitoring when evaluated
at $c>0.1$. Given that 0N{*}{*} dominates 0F{*}{*} and any strategy
that decides to monitor is dominated by the analogous strategy that
chooses not to monitor, then the pure stratergy 0N{*}{*} must be an
equilibrium strategy for Bob. Given this, player 1 will no longer
have an incentive to pool for Bob will always choose to not monitor.
This results in the separating equilibria where Anne plays beer when
she is strong and quiche when she is weak, with Bob deciding not to
monitor and chooses not to duel. This equilibria will have an expected
payoff of 3 for Anne and 0.9 for Bob. 

\section{Conclusions }

This paper looks at pooling equilibrium in generic signaling games
when a cost to monitor is introduced. This is important because this
cost to monitor represents some barrier to access information. Given
that access to information in real world situations probably has some
associated cost, if this destroys pooling equilibrium, it would have
major consequences for the validity of pooling equilibrium in real
world economic scenarios. The argument for these pooling equilibrium
being destroyed are that if player 2 expects all types of player 1
to send the same message, then no information is gained by monitoring.
Therefore it's always beneficial for player 2 to not monitor no matter
how small the cost is. This paper shows that in generic signaling
games where a pooling equilibrium is supported in a non-zero index
component, the outcome of this component survives the introduction
of a sufficiently small but positive cost to monitor. Although this
seems counter-intuitive, this is because all components with a non-zero
index are essential. Although this paper only proves this for generic
signaling games, further research may be able to prove this result
for all signaling games.

\end{document}